%% 
%% Copyright 2007-2020 Elsevier Ltd
%% 
%% This file is part of the 'Elsarticle Bundle'.
%% ---------------------------------------------
%% 
%% It may be distributed under the conditions of the LaTeX Project Public
%% License, either version 1.2 of this license or (at your option) any
%% later version.  The latest version of this license is in
%%    http://www.latex-project.org/lppl.txt
%% and version 1.2 or later is part of all distributions of LaTeX
%% version 1999/12/01 or later.
%% 
%% The list of all files belonging to the 'Elsarticle Bundle' is
%% given in the file `manifest.txt'.
%% 

%% Template article for Elsevier's document class `elsarticle'
%% with numbered style bibliographic references
%% SP 2008/03/01
%%
%% 
%%
%% $Id: elsarticle-template-num.tex 190 2020-11-23 11:12:32Z rishi $
%%
%%
\documentclass[preprint,12pt]{elsarticle}

%% Use the option review to obtain double line spacing
%% \documentclass[authoryear,preprint,review,12pt]{elsarticle}

%% Use the options 1p,twocolumn; 3p; 3p,twocolumn; 5p; or 5p,twocolumn
%% for a journal layout:
%% \documentclass[final,1p,times]{elsarticle}
%% \documentclass[final,1p,times,twocolumn]{elsarticle}
%% \documentclass[final,3p,times]{elsarticle}
%% \documentclass[final,3p,times,twocolumn]{elsarticle}
%% \documentclass[final,5p,times]{elsarticle}
%% \documentclass[final,5p,times,twocolumn]{elsarticle}

%% For including figures, graphicx.sty has been loaded in
%% elsarticle.cls. If you prefer to use the old commands
%% please give \usepackage{epsfig}

\usepackage[utf8]{inputenc}
\usepackage[T1]{fontenc}
\usepackage{lmodern}
\usepackage[english]{babel}
\usepackage[babel=true,final]{microtype}
\usepackage{array}
\usepackage{stmaryrd,amssymb,amsmath,mathtools}
\usepackage{braket}
\usepackage{pgfplots}
\usepackage{xfrac}
\usepackage{enumitem}
\usepackage{siunitx}
\usepackage{etoolbox}
\usepackage{bigfoot}

\newcommand{\ignore}[1]{}

\newcommand{\C}{\mathbb{C}}

\DeclareMathAlphabet{\kw}{\encodingdefault}{\sfdefault}{bx}{n}

% \sisetup{group-separator={,},group-four-digits}

\newtheorem{thm}{Theorem}

\newproof{proof}{Proof}

%% The amssymb package provides various useful mathematical symbols
\usepackage{amssymb}
%% The amsthm package provides extended theorem environments
%% \usepackage{amsthm}

%% The lineno packages adds line numbers. Start line numbering with
%% \begin{linenumbers}, end it with \end{linenumbers}. Or switch it on
%% for the whole article with \linenumbers.
%% \usepackage{lineno}

\journal{Nuclear Physics B}

\begin{document}

\begin{frontmatter}

%% Title, authors and addresses

%% use the tnoteref command within \title for footnotes;
%% use the tnotetext command for theassociated footnote;
%% use the fnref command within \author or \address for footnotes;
%% use the fntext command for theassociated footnote;
%% use the corref command within \author for corresponding author footnotes;
%% use the cortext command for theassociated footnote;
%% use the ead command for the email address,
%% and the form \ead[url] for the home page:
%% \title{Title\tnoteref{label1}}
%% \tnotetext[label1]{}
%% \author{Name\corref{cor1}\fnref{label2}}
%% \ead{email address}
%% \ead[url]{home page}
%% \fntext[label2]{}
%% \cortext[cor1]{}
%% \affiliation{organization={},
%%             addressline={},
%%             city={},
%%             postcode={},
%%             state={},
%%             country={}}
%% \fntext[label3]{}

  \title{Checking in Polynomial Time whether or not a Regular Tree
    Language is Deterministic Top-Down}

%% use optional labels to link authors explicitly to addresses:
%% \author[label1,label2]{}
%% \affiliation[label1]{organization={},
%%             addressline={},
%%             city={},
%%             postcode={},
%%             state={},
%%             country={}}
%%
%% \affiliation[label2]{organization={},
%%             addressline={},
%%             city={},
%%             postcode={},
%%             state={},
%%             country={}}

\author[a1]{Sebastian Maneth}

\affiliation[a1]{organization={University of Bremen, Faculty of Informatics},%Department and Organization
            addressline={Bibliothekstraße~5}, 
            city={Bremen},
            postcode={28359}, 
            state={},
            country={Germany}}

\author[a2]{Helmut Seidl}

\affiliation[a2]{organization={Technical University Munich, Institute for Informatics~I2},%Department and Organization
            addressline={Boltzmannstr.~3}, 
            city={Garching},
            postcode={85748}, 
            state={},
            country={Germany}}

\begin{abstract}
It is well known that for a given bottom-up tree automaton
it can be decided whether or not there exists deterministic
top-down tree automaton that recognized the same tree language.
Recently it was claimed that such a decision can be carried
out in polynomial time (Leupold and Maneth, FCT'2021);
but their procedure and corresponding property is wrong.
Here we correct this mistake and present a correct property
which allows to determine in polynomial time whether or not
a given tree language can be recognized by a deterministic
top-down tree automaton. Furthermore, our new property is
stated for arbitrary deterministic bottom-up tree automata,
and not for minimal such automata (as before).
\end{abstract}

%%Graphical abstract
%\begin{graphicalabstract}
%\includegraphics{grabs}
%\end{graphicalabstract}

%%Research highlights
\iffalse
\begin{highlights}
\item We present an algorithm that determines in polynomial time
  whether or not a given regular tree language can be recognized
  by a deterministic top-down tree automaton.
\item
  The algorithm checks is a so called ``conflict'' exists;
  a conflict is a novel property between three states
  of the given (bottom-up) tree automaton.
\end{highlights}
\fi

\begin{keyword}
%% keywords here, in the form: keyword \sep keyword

%% PACS codes here, in the form: \PACS code \sep code

%% MSC codes here, in the form: \MSC code \sep code
%% or \MSC[2008] code \sep code (2000 is the default)

\end{keyword}

\end{frontmatter}

\section{Introduction}

Deterministic top-down tree (DTD) automata are strictly weaker than
deterministic bottom-up tree (DBU) automata. In fact, even certain finite
languages cannot be recognized by DTD automata, such as the language
consisting of the two trees $f(a,b)$ and $f(b,a)$.
Despite this weakness, DTD automata have certain advantages over DBU
automata, e.g., they can be evaluated more efficiently.
Several characterizations of DTD languages within the DBU languages
have been studied in the literature.
Viragh~\cite{DBLP:journals/actaC/Viragh81}
introduced the ``path-closed'' property and constructs a so called
``powerset automaton'' for the path closure of a DBU language.
A similar method is presented by G{\'e}csec and
Steinby~\cite{DBLP:books/others/tree1984}.
Another approach is the ``homogeneous closure'' of
Nivat and Podelski~\cite{DBLP:journals/siamcomp/NivatP97};
also they construct an automaton which
has as state set the powerset of the original one.
The exact running time of all these procedures had not been analyzed.

In a recent conference publication by Leupold and
Maneth~\cite{DBLP:conf/fct/LeupoldM21}
a property was presented which supposedly allows to check whether
or not the language given by a minimal deterministic bottom-up tree automaton
can be recognized by a deterministic top-down tree automaton.

Their property was an attempt to lift the well-known
``subtree exchange property'' of Nivat and
Podelski~\cite{DBLP:journals/siamcomp/NivatP97}
to minimal deterministic bottom-up tree automata.
Essentially the property means, that if certain transitions are present,
for instance the transitions
\[
\begin{array}{lcl}
f(q_1,q_2)&\to& q\\
f(q_2,q_1)&\to& q
\end{array}
\]
then also other transitions \emph{must} be present in the given automaton;
in this case the transitions
\[
\begin{array}{lcl}
f(q_1,q_1)&\to& q\\
f(q_2,q_2)&\to& q.
\end{array}
\]

Unfortunately, the property of Leupold and Maneth is \emph{wrong}:
there is a tree automaton which does not satisfy the property, but which
\emph{is} recognizable by a deterministic top-down tree automaton!
Such a counter example was identified and kindly communicated to
the authors by Christof L{\"o}ding. 
The idea of the counter example is as follows.
We consider trees of this form.
\[
\underbrace{g( \cdots g(}_{n\text{ times}} f(x,y) \cdots )
\]
with these properties:
\begin{itemize}
\item
$x,y\in\{a,b\}$ if $n=0$
\item
$x=a$ and $y=b$ if $n>0$ is odd and
\item 
$x=b$ and $y=a$ if $n>0$ is even.
\end{itemize}
The corresponding minimal deterministic bottom-up tree automaton
has the set of states $\{ q_a, q_b, q, p_1, p_2, p, p'\}$.
Its set of final states is $\{q, p_1,p_2,p\}$. 
It consists of the following set of transitions.
\[
\begin{array}{lcllcl}
a&\to&q_a & g(p_1)&\to& p\\
b&\to&q_b & g(p_2)&\to& p'\\
f(q_a,q_a)&\to& q & g(p)&\to& p'\\
f(q_b,q_b)&\to& q & g(p')&\to& p\\
f(q_a,q_b)&\to& p_1\\
f(q_b,q_a)&\to& p_2\ \quad\ &\ 
\end{array}
\]
According to the property of Leupold and Maneth,
the two transitions into the state $q$ demand that also
the transitions
$f(q_a,q_b)\to q$ and 
$f(q_a,q_b)\to q$ be present in the automaton. 
Since they are not present, their property erroneously flags this
language as \emph{not} recognizable by a deterministic top-down
tree automaton. However, as the reader may verify, this language
\emph{can} be recognized by a deterministic top-down tree automaton.
Thus, it is too strict to demand the presence of the above two
transitions into the state $q$, but, it suffices if such transitions
exist into \emph{other} states, as long as these states are
both final states (because $q$ is a final state).
The new property presented in this paper precisely formulates this idea.

Besides being correct, our new property has another advantage:
it is formalized for arbitrary deterministic bottom-up
tree automata, i.e., the automata need \emph{not} be minimal.
Using the example of above, the property states that if there
are transitions
\[
\begin{array}{lcl}
f(q_1,q_1)&\to& q\\
f(q_2,q_2)&\to& q'\\
f(q_1,q_2)&\to& q''
\end{array}
\]
then there \emph{may not} a input context such that starting
from states $q$ and $q'$ the context is accepting, but
starting from $q''$ the context is rejecting. 
I.e., if there was such a context, then the tree language
cannot be recognized by a deterministic top-down tree automaton.

Note that in the context of XML, several notions of
deterministic top-down tree
automata for \emph{unranked} trees have been studied~\cite{DBLP:journals/jcss/GeladeIMNP13,martens,DBLP:journals/tods/MartensNSB06}.
For each of these it takes exponential time to decide whether or
not a given regular unranked tree language is in the respective
class (because regular expressions are used in the definition of
the regular unranked tree language).
Unranked top-down and bottom-up tree automata were already earlier
considered by Cristau, L{\"o}ding,
and Thomas~\cite{DBLP:conf/fct/CristauLT05}.
For their variants they decide (in at least exponential time) whether
a language given by a deterministic bottom-up unranked tree automaton can be
recognized by a deterministic top-down unranked tree automaton. 
Martens, Neven, and Schwentick~\cite{DBLP:conf/birthday/MartensNS08}
present a recent survey of results about deterministic ranked and
unranked tree languages.

\newtheorem{theo}{Theorem}
\newtheorem{prob}{Open Problem}
\newtheorem{prp}[thm]{Proposition}
\newtheorem{lem}[thm]{Lemma}
\newtheorem{cor}[thm]{Corollary}
\newtheorem{ex}[thm]{Example}
\newtheorem{df}[thm]{Definition}
\newcommand{\bex}{\begin{ex}\normalfont}
\newcommand{\bdf}{\begin{df}\normalfont}
\newcommand{\eex}{\qed\end{ex}\normalfont}
\newcommand{\edf}{\end{df}\normalfont}

\section{Preliminaries}

A ranked alphabet $\Sigma$ is a finite set of symbols, each one of which
has associated with it a non-negative integer called it \emph{arity}.
We write $\Sigma=\{f^{(2)}, a^{(0)}, b^{(0)}\}$ to denote that
symbol $f$ has arity two, and $a$ and $b$ both have arity zero.
The subset of symbols of $\Sigma$ of rank $k$ is denoted by
$\Sigma^{(k)}$.

The set $T_\Sigma$ of (ranked, finite, ordered) \emph{trees over} $\Sigma$
is the smallest set $T$ of strings such that whenever
$k\geq 0$, 
$t_1,\dots,t_k\in T$, and
$f\in\Sigma^{(k)}$, then also $f(t_1,\dots,t_k)\in T$.
For trees of the form $a()$ we simply write $a$.

We fix a special ``variable'' symbol $x$ which is not part of
any ranked alphabet. A \emph{context over} $\Sigma$ is a
tree $c$ over $\Sigma$, however, exactly one leaf of $c$ is
labeled by the special symbol $x$. The set of all contexts over $\Sigma$
is denoted by $\C_\Sigma$.
For a tree (or a context) $\xi=f(\xi_1,\dots,\xi_k)$ we define its
\emph{set of nodes} as 
$N(\xi) :=\{ \epsilon\} \cup \{iu \mid i\in\{1,\dots,k\}, u\in N(\xi_i)\}$.
Here $\epsilon$ denotes the root node.

\begin{df}\label{DBA}
A deterministic bottom-up tree automaton (DBA for short) $A$ is
a tuple $(Q,\Sigma,\delta,F)$ where $Q$ is a finite set of states,
$\Sigma$ is a ranked alphabet, $\delta$ is the transition
function, and $F\subseteq Q$ is the set of final states.
The function $\delta$ maps every pair $(w,f)$ with 
$w\in Q^k$, $f\in\Sigma^{(k)}$, and $k\geq 0$ to a state in $Q$.
\end{df}

For $q\in Q$ we denote by $A_q$ the DBA obtained from $A$ by
changing $F$ into the set $\{q\}$.
The transition function of a DBA $A$ is naturally extended
from symbols to trees in $T_\Sigma$:
if $\delta^*(t_i)=q_i$ for $i\in[k]$ and
$\delta(q_1\cdots q_k,f)=q$, then $\delta^*(f(t_1,\dots,t_k))=q$.
The language $L(A)$ of $A$ is defined as
$\{t\in T_\Sigma\mid\delta^*(t)\in F\}$.

We also use the notion of an $A$-run $\rho$: for a tree $t\in T_\Sigma$
it is a mapping $\rho$ from $N(t)$ to $Q$ such that
for all nodes $u\in N(t)$, if $u$ is labeled $f\in\Sigma^{(k)}$,
$\rho(u)=q$ and $\rho(ui)=q_i$ for $i\in[k]$, then it must hold that
$\delta(q_1\cdots q_k,f)=q$. 

\begin{df}\label{DTA}
A deterministic top-down tree automaton (DTA for short) $A$ is
a tuple $(Q,\Sigma,q_0,\delta)$ where $Q$ is a finite set of states,
$\Sigma$ is a ranked alphabet,
$q_\in Q$ is the initial state, and
$\delta$ is the transition function.
The function $\delta$ maps every pair $(q,f)$ with 
$q\in Q$, $f\in\Sigma^{(k)}$, and $k\geq 0$ to a an
element of $Q^k$.
\end{df}

For $q\in Q$ we denote by $A_q$ the DTA obtained from $A$ by
changing $q_0$ into the state $q$.
We define the notion of an $A$-run: for a tree $t\in T_\Sigma$
it is a mapping $\rho$ from $N(t)$ to $Q$ such that
for all nodes $u\in N(t)$, if $u$ is labeled $f\in\Sigma^{(k)}$,
$\rho(u)=q$ and $\rho(ui)=q_i$ for $i\in[k]$, then it must hold that
$\delta(q,f)=q_1\cdots q_k$.

\begin{df}\label{bExchange}
A regular tree language $L$ fulfills the {\em exchange property} if, for every $t \in L$ and every node
$u \in N(t)$,
if $t[u \leftarrow f(t_1 , \dots, t_k )] \in L$ and also $t[u \leftarrow f(s_1 , \dots , s_k )] \in L$, then
$t[u \leftarrow f(t_1 , \dots ,t_{i-1}, s_i, t_{i+1} , \dots , t_k )] \in L$ for each $i = 1, \dots , k$.
\end{df}

\section{Checking if a Tree Language is Deterministic Top-Down}

We say that a tree language $L$ is \emph{deterministic top-down} if there
exists a deterministic top-down tree automaton $A$ such that $L(A)=L$.

Let $A=(Q,\Sigma,\delta,F)$ be a deterministic bottom-up tree automaton (DBA).
A state $q\in Q$ is \emph{reachable} if there exists a tree $t\in T_\Sigma$
such that $\delta^*(t)=q$.
We say that the DBA $A$ is \emph{reduced} if every state $q$ of $A$ is reachable. 
Note that for a given DBA it is straightforward to construct in polynomial
time an equivalent DBA that is reduced. 
We therefore assume from now on that each given DBA is reduced.

\begin{df}\label{df:conflict}
Let $A=(Q,\Sigma,\delta,F)$ be a DBA.
%
% Is minimality really required here??
%
The triple $(q,q',q'')\in Q^3$ is called a \emph{conflict} 
if there is an input symbol $f\in\Sigma$ of rank $k\geq 2$,
a context $c\in\C_\Sigma$, 
states $p_1,\ldots,p_k,p'_1,\ldots,p'_k\in Q$,
and an index $j$ with $1\leq j\leq k$
such that the following holds:
\begin{enumerate}
\item[(1)] $\delta(p_1\ldots p_k,f) = q$,
\item[(2)] $\delta(p'_1\ldots p'_k,f) = q'$,
\item[(3)] $\delta(p_1\ldots p_{j-1}p'_jp_{j+1}\ldots p_k,f) = q''$,
\item[(4)] $\delta^*(q,c)\in F$ as well as $\delta^*(q',c)\in F$, but
\item[(5)] $\delta^*(q'',c)\not\in F$.
\end{enumerate}
\end{df}

Note that it follows from the fact that $A$ is a \emph{deterministic}
bottom-up tree automaton, that $q\not=q''$ and $q'\not=q''$.

\begin{lem}\label{lem:no_conflict}
If a BTA $A$ has conflicts, then $L(A)$ is not deterministic top-down.
\end{lem}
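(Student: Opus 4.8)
The plan is to route the argument through the \emph{exchange property} of Definition~\ref{bExchange}. Concretely, I will show two things: (a)~the language of every deterministic top-down tree automaton satisfies the exchange property, and (b)~a conflict in $A$ (which exists by hypothesis) produces a tree that witnesses a failure of the exchange property in $L(A)$. Putting (a) and (b) together, $L(A)$ cannot be deterministic top-down.

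For part~(b) I would argue as follows. Fix a conflict $(q,q',q'')$ together with its witnesses from Definition~\ref{df:conflict}: a symbol $f\in\Sigma^{(k)}$, a context $c\in\C_\Sigma$, states $p_1,\dots,p_k,p'_1,\dots,p'_k$, and an index $j$ satisfying conditions (1)--(5). Since $A$ is reduced, choose trees $t_1,\dots,t_k,s_1,\dots,s_k$ with $\delta^*(t_i)=p_i$ and $\delta^*(s_i)=p'_i$. Writing $c[x\leftarrow\sigma]$ for the tree obtained by replacing the $x$-leaf of $c$ by $\sigma$, set $t:=c[x\leftarrow f(t_1,\dots,t_k)]$, $\hat t:=c[x\leftarrow f(s_1,\dots,s_k)]$ and $t':=c[x\leftarrow f(t_1,\dots,t_{j-1},s_j,t_{j+1},\dots,t_k)]$. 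Bottom-up determinism of $A$ together with (1), (2), (3) gives that the subtree plugged into $c$ evaluates to $q$, $q'$, $q''$ respectively, and then (4), (5) give $t\in L(A)$, $\hat t\in L(A)$, but $t'\notin L(A)$. Now, in $t$ the node $u\in N(t)$ occupied by $x$ already carries $f(t_1,\dots,t_k)$, so $t[u\leftarrow f(t_1,\dots,t_k)]=t\in L(A)$ and $t[u\leftarrow f(s_1,\dots,s_k)]=\hat t\in L(A)$; hence the exchange property, instantiated at index $j$, would force $t'=t[u\leftarrow f(t_1,\dots,t_{j-1},s_j,t_{j+1},\dots,t_k)]\in L(A)$ --- contradicting $t'\notin L(A)$. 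So $L(A)$ fails the exchange property.

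For part~(a), let $B=(Q,\Sigma,q_0,\delta)$ be a DTA and $B_p$ its variant started in state $p$. The structural observation is that, since $B$ is deterministic and reads top-down, every tree has a unique $B$-run and the state it assigns to a node $v$ depends only on the path (labels and child-indices) from the root to $v$, never on a subtree branching off that path. Hence membership in $L(B)$ is ``compositional at a node'': if a node $v$ has state $p$, label $f\in\Sigma^{(k)}$ and $\delta(p,f)=p_1\cdots p_k$, then the tree is in $L(B)$ iff the run restricted to the nodes not strictly below $v$ is accepting and the $\ell$-th subtree below $v$ lies in $L(B_{p_\ell})$ for each $\ell$. Given $\tau\in L(B)$, a node $u$, and $\tau[u\leftarrow f(a_1,\dots,a_k)],\tau[u\leftarrow f(b_1,\dots,b_k)]\in L(B)$, I would note that these two trees and $\tau':=\tau[u\leftarrow f(a_1,\dots,a_{i-1},b_i,a_{i+1},\dots,a_k)]$ coincide on all nodes not strictly below $u$, so their runs do too, yielding a common state $p$ at $u$ with $\delta(p,f)=p_1\cdots p_k$. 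The first membership then gives $a_\ell\in L(B_{p_\ell})$ for all $\ell$ (and that the run above $u$ is accepting, a property $\tau'$ inherits), the second gives $b_i\in L(B_{p_i})$, and so the compositionality criterion applied to $\tau'$ gives $\tau'\in L(B)$. As $i$ was arbitrary, $L(B)$ has the exchange property.

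The step I expect to be the crux is this ``compositionality at a node'' fact for DTAs --- that a subtree's fate is sealed by the single state entering its root, so subtrees rooted at equal-state nodes can be swapped freely. Proving it cleanly also forces one to pin down the acceptance convention of the DTA model (e.g.\ that $\delta$ is effectively partial at nullary symbols, or that there is a leaf-acceptance set); that is the one place where care is required, the remainder being routine manipulation of $\delta^*$, contexts, and subtree substitution. It is precisely here that determinism is essential: a nondeterministic top-down automaton may process a subtree in several ways and the swap fails, which is exactly why the exchange property, and with it the notion of conflict, is the right obstruction.
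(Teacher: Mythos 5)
Your proposal is correct and part~(b) is essentially the paper's own proof: instantiate the conflict with trees $t_i,s_i$ reaching $p_i,p'_i$ (using that $A$ is reduced), observe via conditions (4) and (5) that the two plugged-in trees are in $L(A)$ while the $j$-swapped one is not, and conclude that the exchange property fails. The only difference is that you additionally prove part~(a) --- that every deterministic top-down language satisfies the exchange property via the ``state at a node depends only on the root-to-node path'' compositionality argument --- whereas the paper simply invokes this classical fact without proof; your argument for it is sound, and your remark that the acceptance convention (partial $\delta$, acceptance as existence of a run) must be pinned down there is well taken.
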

\begin{proof}
Let $A=(Q,\Sigma,\delta,F)$. Since $A$ has a conflict, there are 
$(q,q',q'')\in Q^3$, 
$f\in\Sigma^{(k)}$ with $k\geq 2$,
$c\in\C_\Sigma$,
$j\in[k]$, and  
$p_1,\ldots,p_k,p'_1,\ldots,p'_k\in Q$ such that 
Conditions~(1)--(5) of Definition~\ref{df:conflict} hold.
Let $t_1,\dots,t_k,t_1',\dots,t_k'\in T_\Sigma$ such that, for $i\in[k]$,
$\delta^*(t_i)=p_i$ and $\delta^*(t_i')=p_i'$.
From Condition~(4) we know that 
both $c[f(t_1,\dots, t_k)]$ and $c[f(t_1',\dots, t_k')]$ are in $L(A)$.
Assume that $L(A)$ is deterministic top-down.
Then by the subtree exchange property, also 
$t=c[f(t_1,\dots, t_{j-1}, t'_{j}, t_{j+1}, t_k)]\in L(A)$.
However, by Conditions~(3) and~(5), $t\not\in L(A)$. Thus, $L(A)$ is not deterministic top-down.
\qed
\end{proof}

Recall that DTA abbreviates ``deterministic top-down tree automaton''.
For a given DBA $A$ we now define its ``associated DTA''.
This associated DTA always accepts a superset of $L(A)$.
Later we will show that if $A$ has no conflicts (and is ``complete''), then
in fact the associated DTA accepts precisely $L(A)$. 
Let $A=(Q,\Sigma,\delta,F)$.
The \emph{associated DTA} of $A$ is the DTA $A'=(2^Q,\Sigma, F, \delta')$.
Let $Q_0$ be a non-empty subset of $Q$.
Let $f\in \Sigma^{(k)}$ with $k\geq 1$.
Then let
\[
\delta'(Q_0,f)= Q_1\cdots Q_k
\]
where the $Q_i$ are defined as follows.
Consider all $f$-transitions of $A$ which have a right-hand side in $Q_0$
and denote their left-hand sides by 
\[
f(q^1_1,\dots,q^1_k),\dots, f(q^m_1,\dots,q^m_k).
\]
Then $Q_i=\{q^1_i,\dots, q^m_i\}$ for $i\in[k]$.
If no such transitions exist then
let $Q_i=\emptyset$ for $i\in[k]$.
Finally, for $a\in\Sigma^{(0)}$, if there is an $a$-transition
of $A$ which has a right hand side in $Q_0$, then let
\[
\delta'(Q_0,a)=\varepsilon.
\]

\begin{lem}\label{lem:associated}
Let $A$ be a DBA and let $A'$ be the DTA associated to $A$.
Then $L(A)\subseteq L(A')$.
\end{lem}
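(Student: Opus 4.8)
The plan is to show that for every tree $t \in L(A)$, the associated DTA $A'$ has an accepting run on $t$. I would prove, by induction on the structure of $t$, the following slightly more general statement: for every $q \in Q$ and every $t \in T_\Sigma$ with $\delta^*(t) = q$, and for every subset $Q_0 \subseteq Q$ with $q \in Q_0$, the DTA $A'_{Q_0}$ has a run on $t$ (where $A'_{Q_0}$ denotes $A'$ started in state $Q_0$). Taking $Q_0 = F$ (which is legitimate since $q = \delta^*(t) \in F$ exactly when $t \in L(A)$) then yields an accepting run of $A'$ on $t$, hence $t \in L(A')$.

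For the induction, first consider the base case $t = a$ with $a \in \Sigma^{(0)}$. Then $\delta^*(a) = \delta(\varepsilon, a) = q \in Q_0$, so there is an $a$-transition of $A$ with right-hand side in $Q_0$, and therefore $\delta'(Q_0, a) = \varepsilon$ is defined; the single-node run is accepting in the sense that it halts correctly at the leaf. For the induction step, let $t = f(t_1, \dots, t_k)$ with $f \in \Sigma^{(k)}$, $k \geq 1$, and set $q_i = \delta^*(t_i)$, so that $\delta(q_1 \cdots q_k, f) = q \in Q_0$. Thus $f(q_1, \dots, q_k)$ is one of the left-hand sides $f(q^\ell_1, \dots, q^\ell_k)$ enumerated in the definition of $\delta'(Q_0, f) = Q_1 \cdots Q_k$, which means $q_i \in Q_i$ for each $i \in [k]$. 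Now apply the induction hypothesis to each $t_i$ with the subset $Q_i$ (valid since $q_i \in Q_i$): each $A'_{Q_i}$ has a run on $t_i$. Assembling these runs below the root, which is labeled $Q_0$ and transitions to $Q_1 \cdots Q_k$ on reading $f$, gives the desired run of $A'_{Q_0}$ on $t$.

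I do not expect any serious obstacle here; the statement is essentially a straightforward "subset construction going the right way." The one point that needs a little care is getting the induction hypothesis general enough — one must quantify over all supersets $Q_0 \ni q$ rather than just over the reachable state, since in the recursive step the relevant set $Q_i$ is not forced to be any particular canonical set but merely must contain $q_i$. Once the statement is phrased that way, the verification of the transition condition at each node is immediate from the definition of $\delta'$, and no use of reducedness, completeness, or the conflict-freeness of $A$ is needed — those hypotheses only enter in the converse inclusion proved later.
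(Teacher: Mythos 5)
Your proposal is correct and follows essentially the same route as the paper: the paper's Claim~1 is exactly your strengthened induction hypothesis (quantifying over all $Q_0$ containing the state $\delta^*(t)$, phrased there via the $A$-run $r$), and both the base case and the inductive step are argued identically from the definition of $\delta'$. No gaps.
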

\begin{proof}
Let $A=(Q,\Sigma,\delta,F)$.
The desired inclusion follows from Claim~1:
if $t\in T_\Sigma$ is in $L(A)$, then there is an $A$-run $r$ on $t$ with $q=r(\varepsilon)\in F$.
By Claim~1, there is an $A'$-run on $t$ starting in any state
of $A'$ containing $q$; since the start state of $A'$ contains $q$, we obtain
that $t\in L(A')$.

\medskip

\textbf{Claim 1:}\quad
Let $t\in T_\Sigma$ and let $r$ be an $A$-run on $t$.
Let $Q_0\in 2^Q$ with $r(\varepsilon)\in Q_0$.
There exists an $A'$-run $r'$ on $t$ such that $r'(\varepsilon)=Q_0$
and $r(u)\in r'(u)$ for all $u\in V(t)$.

\medskip

The claim is proved by induction on the structure of $t$.
Let $t=a\in\Sigma^{(0)}$.
If $r(\varepsilon)=q$ then $\delta(a, \varepsilon)=q$; by the definition of $A$,
$\delta'(Q_0',a)=\varepsilon$ for every $Q_0'\in 2^Q$ such that
$q\in Q_0'$.
Hence $\delta'(Q_0,a)=\varepsilon$ which implies
the existence of an $A'$-run $r'$ with $r'(\varepsilon)=Q_0$.

Let $t=f(t_1,\dots,t_k)$ with $k\geq 1$. By the definition of $A'$, there
is a transition $\delta'(Q_0,f)=Q_1\cdots Q_k$ such that
$r(i)\in Q_i$ for $i\in[k]$; this is because $A$ has an $f$-transition
with left-hand side $(f,r(1)\dots r(k))$ and right-hand side $r(\varepsilon)\in Q_0$.
Thus, there is an $A'$-run $r'$ on $t$ with $r'(\varepsilon)=Q_0$.
By induction there are $A'$-runs $r_i'$ on $t_i$ with
$r_i'(\varepsilon)=Q_i$ for $i\in[k]$. Thus, $r'$ exists with
$r'(iu)=r_i'(u)$ for $i\in[k]$.
\qed
\end{proof}

\begin{ex}
Let $A=(\{q,  q_a, q_b, p, p', p_{ab}, p_{ba}\}, \Sigma,\delta, \{q, p, p_{ab}, p_{ba} \})$ be a DBA with the transition function defined as follows:
\begin{align*}
	\delta(a,\varepsilon) &= q_a, &
	\delta(b,\varepsilon) &= q_b,\\
	\delta(f,q_aq_a) &= q, &
	\delta(f,q_bq_b) &= q,\\
	\delta(f,q_aq_b) &= p_{ab}, &
	\delta(f,q_bq_a) &= p_{ba},\\
	\delta(g,p_{ab}) &= p,   &
	\delta(g,p_{ba}) &= p',\\
	\delta(g,p') &= p,   &
	\delta(g,p) &= p',	 
\end{align*}

We apply the
construction described preceding Lemma~\ref{lem:associated} to $A$.
The set of final states is $\{q, p, p_{ab}, p_{ba} \}$.
This set is now the initial state of our DTA $A'$.
We only discuss the states that are reachable from
the initial state. 
Let us consider the input symbol $g$.
We now collect the left-hand sides of all $g$-transitions of $A$
which have a right-hand side in $F$:
$(g, p_{ab})$ and $(g, p')$. Thus, $\{ p_{ab},p'\}$ is a state of $A$ and
\[
\delta'(\{q, p, p_{ab}, p_{ba} \},g)= \{p_1,p'\}
\]
is a transition of $A'$.
In the next step, we collect the left-hand sides of all $g$-transitions
of $A$ which have a right-hand side in $\{p_1,p'\}$:
$(g,p_{ba})$ and $(g,p)$. Thus $\{p_{ba},p\}$ is a state.
We obtain these two $g$-transitions of $A'$:
\[
\begin{array}{lcl}
\delta'(\{p_{ab},p'\}, g)&=& \{p_{ba},p\} \\
\delta'(\{p_{ba},p\}, g)&=& \{p_{ab},p'\}.
\end{array}
\]
Next we consider the input symbol $f$.
The left-hand sides of $f$-transitions
of $A$ which have their right-hand side in the final state $\{q, p, p_{ab}, p_{ba} \}$ are
$(f,q_aq_a), (f,q_bq_b), (f,q_aq_b)$, and $(f,q_bq_a)$.
Thus 
\[
\delta'(\{q, p, p_{ab}, p_{ba} \},f)= \{q_a,q_b\}\{q_a,q_b\}
\]
is a transition of $A'$.
For the states $\{p_{ab},p'\}$ and $\{p_{ba},p\}$ we add the transitions
\[
\begin{array}{lcl}
\delta'(\{p_{ab},p'\},f)&=& \{p_a\},\{q_b\} \\
\delta'(\{p_{ba},p\},f)&=& \{q_b\},\{q_a\}
\end{array}
\]
to $A'$.
Finally, for $x\in \{a,b\}$ we add
$\delta'(\{q_a,q_b\},x) = \varepsilon $ and
$\delta'(\{ q_x \},x) =  \varepsilon $ to $A'$.
\end{ex}

Let $A=(Q,\Sigma,\delta,F)$ be a DBA and let $q_{\text{trap}}$ be
a fresh symbol not in $Q$.
The \emph{completion} of $A$ is the DBA $A'=(Q\cup\{q_{\text{trap}}\},\Sigma,\delta\cup \delta',F)$
where $\delta'(w,f)=q_{\text{trap}}$ whenever
$\delta(w,f)$ is undefined for $f\in\Sigma^{(k)}$, $k\geq 0$, and $w\in Q^m$.
A \emph{completed DBA} is the completion of some DBA $A$. 

\begin{lem}\label{lem:conflict}
Let $A$ be a completed DTA that has no conflicts and 
let $A'$ be the associated DTA of $A$.
Then $L(A')=L(A)$.
\end{lem}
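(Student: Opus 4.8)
The plan is to prove $L(A') = L(A)$ by establishing the two inclusions separately. The inclusion $L(A) \subseteq L(A')$ is already given by Lemma~\ref{lem:associated}, so the real work is the reverse inclusion $L(A') \subseteq L(A)$, which is where the hypotheses ``completed'' and ``no conflicts'' must be used. First I would set up the right induction hypothesis. The natural statement is a claim dual to Claim~1 in the proof of Lemma~\ref{lem:associated}: for every tree $t$ and every state $Q_0$ of $A'$ (i.e.\ non-empty $Q_0\subseteq Q$) such that there is an $A'$-run $r'$ on $t$ with $r'(\varepsilon)=Q_0$, there exists a state $q\in Q_0$ and an $A$-run $r$ on $t$ with $r(\varepsilon)=q$ and $r(u)\in r'(u)$ for all $u\in N(t)$. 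Applying this to $t\in L(A')$ with $Q_0=F$ would then yield $r(\varepsilon)\in F$, hence $t\in L(A)$. Because $A$ is completed, the ``bottom-up'' run $r$ always exists with \emph{some} target state; the content of the claim is that we can choose it to land inside $Q_0$, tracking a single state through the subset.

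The induction is on the structure of $t$. The base case $t=a\in\Sigma^{(0)}$ is immediate: if $\delta'(Q_0,a)=\varepsilon$ is a transition of $A'$, then by construction some $a$-transition of $A$ has its right-hand side in $Q_0$, i.e.\ $\delta(\varepsilon,a)=q$ for some $q\in Q_0$. For the inductive step $t=f(t_1,\dots,t_k)$, the $A'$-run gives $\delta'(Q_0,f)=Q_1\cdots Q_k$ with $A'$-runs on each $t_i$ starting in $Q_i$; by induction we get states $q_i\in Q_i$ and $A$-runs $r_i$ on $t_i$ with $r_i(\varepsilon)=q_i$. By completeness $\delta(q_1\cdots q_k,f)=q$ for some state $q$ of (the completed) $A$, and we must show $q\in Q_0$. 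This is the crux, and it is exactly where the \emph{no conflicts} hypothesis and, importantly, that $L$ satisfies the exchange property (or rather, that a completed conflict-free DBA behaves as if its language did) comes in. By definition of $\delta'$, each $q_i\in Q_i$ arises as the $i$-th component of the left-hand side of \emph{some} $f$-transition whose right-hand side lies in $Q_0$; so there are states $r^{(1)},\dots,r^{(k)}\in Q_0$ and transitions $\delta(\bar q^{(i)},f)=r^{(i)}$ where $\bar q^{(i)}$ is some tuple whose $i$-th component is $q_i$. Starting from one such transition $\delta(\bar q^{(1)},f)=r^{(1)}\in Q_0$ and swapping coordinates one at a time to $q_2,\dots,q_k$, each swap is an instance of the situation in Definition~\ref{df:conflict}: if some intermediate result $q''$ had $q''\notin Q_0$, I would need a context $c$ separating it from the two ``good'' results to witness a conflict — so the argument must be arranged so that non-membership in $Q_0$ propagates to non-acceptance under a suitable context, contradicting no-conflicts.

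The main obstacle, then, is making precise the bridge between ``$q''\notin Q_0$'' and ``$\delta^*(q'',c)\notin F$ for some context $c$ with $\delta^*(q',c),\delta^*(q,c)\in F$''. I expect this requires a separate lemma characterizing the reachable states $Q_0$ of $A'$: namely that $Q_0$ reachable in $A'$ means $Q_0 = \{\, p \in Q \mid \delta^*(p,c)\in F \,\}$ for the context $c$ obtained by reading, top-down, the path of symbols/transitions of $A'$ from $F$ to $Q_0$ — i.e.\ each reachable subset-state of $A'$ is precisely the set of $A$-states from which the corresponding context leads to acceptance. Once that characterization is in hand, $q''\notin Q_0$ literally means $\delta^*(q'',c)\notin F$ while $r^{(i)}, r^{(i-1)}\in Q_0$ give the two accepting conditions, so the swap exhibits a conflict, contradiction; hence every intermediate state, and in particular $q$, lies in $Q_0$. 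I would prove this subset-state characterization first (by an easy induction on the length of the path in $A'$, using the definition of $\delta'$), then feed it into the inductive step above. With that, $q\in Q_0$ follows, the $A$-run $r$ with $r(\varepsilon)=q$, $r(iu)=r_i(u)$ witnesses the claim, and setting $Q_0=F$ completes the proof of $L(A')\subseteq L(A)$ and hence of the lemma.
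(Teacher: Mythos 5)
Your proposal is correct and matches the paper's proof in all essentials: the paper's Claim~2 carries exactly your characterization of reachable subset-states (that $Q_0=\{p\mid\delta^*(p,c)\in F\}$ for a suitable context $c$) as part of its induction hypothesis on the tree, builds the contexts $c_i=c[f(\tilde t_1,\dots,x,\dots,\tilde t_k)]$ for the children, and uses the no-conflict hypothesis via the same coordinate-swapping argument together with completeness to exclude the states outside $Q_i$. The only difference is packaging: you would extract the subset-state characterization as a separate lemma proved along paths of $A'$, whereas the paper folds it into the single induction on the structure of $t$.
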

\begin{proof}
From Lemma~\ref{lem:associated} we already know that $L(A')\supseteq L(A)$.
Hence, it remains to show that $L(A')\subseteq L(A)$.
This follows from Claim~2 for $Q_0=F$ (and taking $c=x_1$).

\medskip

\textbf{Claim 2:}\quad
Let $Q_0\subseteq 2^Q-\{\emptyset\}$ such that there exists a context $c\in\C_\Sigma$ with
(1)~for all $q\in Q_0$: $\delta^*(q,c)\in F$ and
(2)~for all $q\in Q-Q_0$: $\delta^*(q,c)\not\in F$.
Let $t\in T_\Sigma$.
If $t\in L(A_{Q_0}')$ then $t\in\bigcup_{q\in Q_0}L(A_q)$.

\medskip

The claim is proved by induction on the structure of $t$.
Let $t=a\in\Sigma^{(0)}$.
If $t\in L(A_{Q_0}')$ then by the definition of $A'$ there exists a $q\in Q_0$
such that $\delta(a,\varepsilon)=q$, i.e., $t\in L(A_q)$.

Let $t=f(t_1,\dots,t_k)$ for $k\geq 1$.
Let $Q_1,\dots,Q_k$ such that $\delta'(Q_0,f)=Q_1\cdots Q_k$.
Assume now that $t\in L(A_{Q_0}')$.
By the definition of $A'$ this implies that $Q_i\not=\emptyset$ for every $i\in[k]$.
We now want to establish that we can apply the induction hypothesis to
each $t_i$ with $i\in[k]$.
Let $i\in[k]$ and $q_i\in Q_i$.
By the definition of $A$ there are $q_j$ with $j\in[k]-\{i\}$ and 
the transition 
$\delta(f,q_1\cdots q_k)\in Q_0$.
Let $c_i=c[f(\tilde{t}_1,\dots,\tilde{t}_{i-1},x_1,\tilde{t}_{i+1},\dots,\tilde{t}_k)$.
where $c$ is a context as in the claim and
each $\tilde{t}_j$ is an arbitrary tree in $L(A_{q_j})$ for
$j\in[k]-\{i\}$.
Such trees exist because $A$ is reduced.
Thus $\delta^*(q_i,c_i)\in F$.
Consider any other $q_i'\in Q_i$.
Then there are $q'_i$ such that 
$\delta(f,q_1'\cdots q_k')\in Q_0$.
Since there is no conflict,
$\delta(f,q_1\cdots q_{i-1} q_i' q_{i+1} \cdots q_k)=q'$
and $\delta^*(c,q')\in F$.
Thus $\delta^*(q_i',c_i)\in F$.
Now consider some $q_i'\in Q-Q_i$.
By the definition of $A$ there is no transition
$\delta(f,p_1\cdots p_{i-1} q_i' p_{i+1}\cdots p_k)\in Q_0$ for
any $p_1,\dots,p_k\in Q$. Since $A$ is complete, this implies that
$\delta(f,p_1\cdots p_{i-1} q_i' p_{i+1}\cdots p_k)\in Q-Q_0$ for
any $p_1,\dots,p_k\in Q$.

Thus, we can apply the induction hypothesis to $Q_i$ and $t_i$
for $i\in[k]$.
we obtain that $t_i\in\bigcup_{q\in Q_i} L(A_{q_i})$.
Let $q_i$ be the unique state in $Q_i$ such that $\delta^*(t_i)=q_i$.
By the definition of $A$, $\delta(f,q_1\cdots q_k)\in Q_0$.
Thus $t\in\bigcup_{q\in Q_0}L(A_q)$.
\qed
\end{proof}

From Lemmas~\ref{lem:no_conflict} and \ref{lem:conflict} we obtain the following theorem.

\begin{thm}
The language $L(A)$ of a
completed DBA $A$ is deterministic top-down iff $A$ has no conflicts.
\end{thm}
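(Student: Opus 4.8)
The plan is to derive the theorem directly by combining the two lemmas established above, treating the two implications of the biconditional separately.

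For the direction ``$L(A)$ deterministic top-down $\Rightarrow$ $A$ has no conflicts'', I would argue by contraposition. Suppose $A$ has a conflict. Then Lemma~\ref{lem:no_conflict} immediately yields that $L(A)$ is not deterministic top-down. Hence, if $L(A)$ is deterministic top-down, then $A$ must be conflict-free. Note that this direction does not even use that $A$ is completed; it relies only on the exchange property of deterministic top-down languages, which is what Lemma~\ref{lem:no_conflict} exploits.

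For the converse direction ``$A$ has no conflicts $\Rightarrow$ $L(A)$ deterministic top-down'', I would invoke Lemma~\ref{lem:conflict}. Since $A$ is a completed DBA with no conflicts, that lemma gives $L(A')=L(A)$, where $A'$ is the DTA associated to $A$ via the construction preceding Lemma~\ref{lem:associated}. Because $A'$ is by construction a deterministic top-down tree automaton in the sense of Definition~\ref{DTA}, the language $L(A)=L(A')$ is recognized by a DTA, and is therefore deterministic top-down by definition.

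There is essentially no obstacle here: the real work has already been carried out in Lemmas~\ref{lem:no_conflict}, \ref{lem:associated}, and \ref{lem:conflict}. The only points worth double-checking are that the ``associated DTA'' construction really produces a legal DTA — i.e.\ that $\delta'$ is a genuine function assigning to each pair $(Q_0,f)$ a single tuple of states, which is immediate from the construction — and that the completeness hypothesis on $A$ required by Lemma~\ref{lem:conflict} is exactly the standing assumption made in the statement of the theorem, so that the lemma applies verbatim.
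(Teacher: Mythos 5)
Your proposal is correct and matches the paper's own argument exactly: the theorem is obtained by combining Lemma~\ref{lem:no_conflict} (contrapositive for the forward direction) with Lemma~\ref{lem:conflict} applied to the associated DTA (for the converse). No further comment is needed.
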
  

We call a state $q_t$ a \emph{trap} state of $A$, if
$q_t\not\in F$ and for all $f\in\Sigma^{(k)}$ and $q_1,\ldots,q_k\in Q$,
$\delta(q_1\ldots q_k,f) = q_t$ whenever $q_i=q_t$ for some $i\in[k]$.
Since transitions containing trap states need not explicitly represented,
we define the \emph{size} $|A|$ of the automaton $A$ as
\[
	|A| = |Q|+\sum\{k+1\mid \delta(q_1\ldots q_k,f)\neq q_t\}
\]
where we w.l.o.g.\ assume that all input symbols of $A$ also occur in
transitions without trap states.
We now show that testing whether or not a given DBA $A$ has conflicts
can be achieved in polynomial time.

\begin{thm}
Let $A=(Q,\Sigma,\delta,F)$ be a deterministic bottom-up tree automaton.
Let $n=|Q|$, $m=|\delta|$, and let $a$ be the maximal arity
of the symbols in $\Sigma$.
It is decidable in time $O(n^3 m^2 a)$ whether or not $L(A)$
can be recognized by a deterministic top-down automaton.
\end{thm}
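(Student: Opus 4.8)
The strategy is to show that the defining conditions of a conflict can be tested by a bounded search. The key structural observation is that condition (4)–(5) of Definition~\ref{df:conflict} depends on the triple $(q,q',q'')$ only through whether there exists a context $c$ that is accepting from both $q$ and $q'$ but rejecting from $q''$. So the first step is to compute, for all pairs of states, the ``context-separation'' relation: define $q \sim q'$ to hold if every context $c\in\C_\Sigma$ satisfies $\delta^*(q,c)\in F \iff \delta^*(q',c)\in F$, i.e.\ $q$ and $q'$ are Myhill–Nerode equivalent with respect to contexts. Equivalently, $(q,q')$ is \emph{context-separable} if some context accepts from exactly one of them. What we actually need is slightly finer: for the triple $(q,q',q'')$ we need a single context $c$ with $\delta^*(q,c)\in F$, $\delta^*(q',c)\in F$, $\delta^*(q'',c)\notin F$. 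This is exactly the question of whether the pair-state $(q,q')$ and the pair-state $(q,q'')$ (or $(q',q'')$) are distinguished in the product automaton $A\times A$ running on contexts, which is the standard minimization computation. Concretely, form the product $A\times A$ over $\Sigma$, mark as ``good for $c$'' those product states whose run on the one-hole context lands in $F\times F$; then a context $c$ witnessing the triple exists iff in the product $A\times A\times A$ (states $Q^3$) there is a context taking $(q,q',q'')$ into $F\times F\times(Q\setminus F)$. Reachability of a state under context-application in $A^3$ is computed by a single fixed-point / graph-search, and there are $n^3$ target states to check; with the context-transition structure this costs $O(n^3 m a)$ or so. This handles conditions (4) and (5).

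The second step handles conditions (1)–(3): we must decide, for which triples $(q,q',q'')$ there is a symbol $f\in\Sigma^{(k)}$, an index $j$, and tuples $p_1\cdots p_k$, $p'_1\cdots p'_k$ with $\delta(p_1\cdots p_k,f)=q$, $\delta(p'_1\cdots p'_k,f)=q'$, and $\delta(p_1\cdots p_{j-1}p'_j p_{j+1}\cdots p_k,f)=q''$. The point is that the two tuples need only agree on all coordinates except $j$, where the first has $p_j$ and the second has $p'_j$. So for each $f$-transition $\tau=(\,p_1\cdots p_k \to q\,)$ and each index $j\in[k]$, and each $f$-transition $\tau'$ whose left-hand side differs from that of $\tau$ \emph{only in coordinate $j$}, say $\tau' = (\,p_1\cdots p_{j-1} p'_j p_{j+1}\cdots p_k \to q''\,)$ with $p'_j\neq p_j$, we also need the ``mixed'' transition $(\,p_1\cdots p_{j-1} p'_j p_{j+1}\cdots p_k \to q''\,)$ — wait, that \emph{is} $\tau'$; what we additionally need is the transition reading $f$ on $p'_j$ in coordinate $j$ but arbitrary in the other coordinates giving $q'$. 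So the enumeration is: pick transition $\tau=(p_1\cdots p_k\to q)$, pick coordinate $j$, pick transition $\tau''=(p_1\cdots p_{j-1} r p_{j+1}\cdots p_k \to q'')$ agreeing with $\tau$ off coordinate $j$ (with $r\neq p_j$), and pick \emph{any} transition $\tau'=(p'_1\cdots p'_k\to q')$ with $p'_j=r$. This records a candidate triple $(q,q',q'')$. Iterating over all such choices costs $O(m^2 a)$ for the $(\tau,j,\tau'')$ part (for a fixed $\tau$ and $j$, the matching $\tau''$ can be found by bucketing transitions by their left-hand side with coordinate $j$ masked), times up to $n$ choices of $r$... actually $r$ is determined by $\tau''$, and $\tau'$ ranges over all transitions with a fixed value in coordinate $j$, which is at most $m$; so the total is $O(m^2 a)$ candidate records, each yielding one triple, and we collect the set $S\subseteq Q^3$ of triples arising this way.

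Combining: $A$ has a conflict iff some triple in $S$ is also ``context-witnessed'' in the sense of step one, i.e.\ iff $S$ meets the set of $(q,q',q'')$ from which a context reaches $F\times F\times(Q\setminus F)$ in $A^3$. Computing that reachable set and intersecting with $S$ is cheap given the two precomputations. The running time is then the sum of the context-reachability computation in $A^3$ and the transition enumeration: the context-reachability on $n^3$ product states, where each ``step'' of a context is a transition of $A$ with one coordinate designated as the hole, costs $O(n^3 m a)$ if done naively per starting state, or one can do a single backward search from the target set $F\times F\times (Q\setminus F)$ in the reverse graph of the context-transition relation of $A^3$, whose number of edges is $O(n^3 m a)$ (for each triple of states and each pair consisting of a transition of $A$ and a hole position, an edge), giving $O(n^3 m a)$ overall; the enumeration for (1)–(3) is $O(n^3 m^2 a)$ in the worst accounting (bucketing costs $O(ma)$ per transition, and the cross with $\tau'$ costs up to $m$, over $m$ transitions and $a$ positions, then recording into a size-$n^3$ array). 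Taking the dominant term gives the claimed $O(n^3 m^2 a)$.

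\textbf{Main obstacle.} The delicate part is the context-separation computation and getting its cost to fit inside the stated bound: one must argue that ``there is a context $c$ with $\delta^*(q,c)\in F$, $\delta^*(q',c)\in F$, $\delta^*(q'',c)\notin F$'' is decidable by a \emph{single} backward reachability in $A^3$ rather than by iterating over the $n^3$ triples with a fresh search each time, and that the edge count of the relevant graph — pairs (transition of $A$, hole coordinate) lifted to $Q^3$ — is $O(n^3 m a)$, not larger. Once that is set up, matching conditions (1)–(3) by left-hand-side bucketing is routine, and the final intersection is immediate. A secondary subtlety is bookkeeping the index $j$ consistently between the two halves (the hole in the context-free part of the analysis versus the distinguished coordinate in the transition-triple part), but this is purely notational.
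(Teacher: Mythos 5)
Your proposal is correct and follows essentially the same route as the paper: enumerate the triples satisfying conditions (1)--(3) by matching pairs of $f$-transitions that differ in a single coordinate (cost $O(m^2 a)$), decide conditions (4)--(5) by a reachability computation on $Q^3$ under simultaneous one-step context application (cost $O(n^3 m a)$), and intersect. The only cosmetic difference is that you search backward from $F\times F\times(Q\setminus F)$ whereas the paper closes forward from the set of candidate triples; the underlying graph, the correctness argument, and the $O(n^3 m^2 a)$ bound are the same.
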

\begin{proof}
% We first construct the completion $A'$ of $A$.
% \textbf{Woops. This takes exponential time?!?}
In order to verify whether or not $A$ has conflicts, we first construct
the set $T_0$ of all triples satisfying the first three properties.
This takes time $O(m^2k)$.

Then we construct the set $T$ of triples of states reachable from $T_0$ by means of some context.
Technically, the set $T$ is inductively defined by
\begin{itemize}
\item	$T_0\subseteq T$;
\item	Assume that $(q,q',q'')\in T$. Then
$(q_1,q'_1,q''_1)\in T$ if  there is some input symbol $f\in\Sigma$ of arity $k\geq 1$
some index $j$ with $1\leq j\leq k$ together with states
$p_1,\ldots, p_{j-1},p_{j+1},\ldots,p_k\in Q$ such that
\begin{itemize}
\item $\delta(p_1\ldots p_{j-1}qp_{j+1}\ldots p_k, f) = q_1$,  
\item $\delta(p_1\ldots p_{j-1}q'p_{j+1}\ldots p_k, f) = q'_1$, and
\item $\delta(p_1\ldots p_{j-1}q''p_{j+1}\ldots p_k, f) = q''_1$.
\end{itemize}
\end{itemize}
Indeed, determining $T$ from $T$ can be done by repeatedly adding new triples.
	To find the next triple from a given triple $(q,q',q'')$, at most all transtions in $\delta$ without trap state need 
	to be consulted and for each such transition each position $i$ where the state $q$ may occur.
	Accordingly, the set $T$ can be constructed in time $O(n^3\cdot m\cdot a)$.

Then $A$ has a conflict iff $T\cap F\times F\times(Q\backslash F)\neq\emptyset$.
	Given the set $T$, this check can be implemented in time $O(n^3)$.
	Thus, altogether the algorithm requires time $O(m^2\cdot a+ n^3\cdot m\cdot a)\leq O(n^3\cdot m^2\cdot a)$.
        \qed
\end{proof}

%\end{document}

\bibliographystyle{abbrv}
\bibliography{lit}

\begin{thebibliography}{1}

\bibitem{DBLP:conf/fct/CristauLT05}
J.~Cristau, C.~L{\"{o}}ding, and W.~Thomas.
\newblock Deterministic automata on unranked trees.
\newblock In M.~Liskiewicz and R.~Reischuk, editors, {\em Fundamentals of
  Computation Theory, 15th International Symposium, {FCT} 2005, L{\"{u}}beck,
  Germany, August 17-20, 2005, Proceedings}, volume 3623 of {\em Lecture Notes
  in Computer Science}, pages 68--79. Springer, 2005.

\bibitem{DBLP:books/others/tree1984}
F.~G{\'{e}}cseg and M.~Steinby.
\newblock {\em Tree Automata}.
\newblock Akad{\'{e}}niai Kiad{\'{o}}, Budapest, Hungary, 1984.

\bibitem{DBLP:journals/jcss/GeladeIMNP13}
W.~Gelade, T.~Idziaszek, W.~Martens, F.~Neven, and J.~Paredaens.
\newblock Simplifying {XML} schema: Single-type approximations of regular tree
  languages.
\newblock {\em J. Comput. Syst. Sci.}, 79(6):910--936, 2013.

\bibitem{DBLP:conf/fct/LeupoldM21}
P.~Leupold and S.~Maneth.
\newblock Deciding top-down determinism of regular tree languages.
\newblock In E.~Bampis and A.~Pagourtzis, editors, {\em Fundamentals of
  Computation Theory - 23rd International Symposium, {FCT} 2021, Athens,
  Greece, September 12-15, 2021, Proceedings}, volume 12867 of {\em Lecture
  Notes in Computer Science}, pages 341--353. Springer, 2021.

\bibitem{martens}
W.~Martens.
\newblock {\em Static analysis of {XML} transformation and schema languages}.
\newblock PhD thesis, Hasselt University, 2006.

\bibitem{DBLP:conf/birthday/MartensNS08}
W.~Martens, F.~Neven, and T.~Schwentick.
\newblock Deterministic top-down tree automata: past, present, and future.
\newblock In J.~Flum, E.~Gr{\"{a}}del, and T.~Wilke, editors, {\em Logic and
  Automata: History and Perspectives [in Honor of Wolfgang Thomas]}, volume~2
  of {\em Texts in Logic and Games}, pages 505--530. Amsterdam University
  Press, 2008.

\bibitem{DBLP:journals/tods/MartensNSB06}
W.~Martens, F.~Neven, T.~Schwentick, and G.~J. Bex.
\newblock Expressiveness and complexity of {XML} schema.
\newblock {\em {ACM} Trans. Database Syst.}, 31(3):770--813, 2006.

\bibitem{DBLP:journals/siamcomp/NivatP97}
M.~Nivat and A.~Podelski.
\newblock Minimal ascending and descending tree automata.
\newblock {\em {SIAM} J. Comput.}, 26(1):39--58, 1997.

\bibitem{DBLP:journals/actaC/Viragh81}
J.~Vir{\'{a}}gh.
\newblock Deterministic ascending tree automata {I}.
\newblock {\em Acta Cybern.}, 5(1):33--42, 1980.

\end{thebibliography}


\begin{thebibliography}{00}

%% \bibitem{label}
%% Text of bibliographic item

\bibitem{}

\end{thebibliography}

%% \linenumbers

%% main text
%\section{}
%\label{}

%% The Appendices part is started with the command \appendix;
%% appendix sections are then done as normal sections
%% \appendix

%% \section{}
%% \label{}

%% If you have bibdatabase file and want bibtex to generate the
%% bibitems, please use
%%
%%  \bibliographystyle{elsarticle-num} 
%%  \bibliography{<your bibdatabase>}

%% else use the following coding to input the bibitems directly in the
%% TeX file.

\iffalse

\fi

\end{document}